\newtheorem{theorem}{Theorem}
\theoremstyle{definition}
\newtheorem{definition}[theorem]{Definition}
\newtheorem{example}{Example}
\newcommand{\QQ}{\mathbb{Q}}
\newcommand{\RR}{\mathbb{R}}
\begin{document}

\title{A Graph Theoretical Approach for Testing Binomiality of Reversible Chemical Reaction Networks}
\author{
\begin{tabular}[t]{c@{\extracolsep{3em}}c} 
 Hamid Rahkooy  & Cristian Vargas Montero \\
  \small{CNRS, Inria, and the University of Lorraine} &
  \small{CNRS, Inria, and the University of Lorraine}\\ 
 \small{Nancy, France} & \small{Nancy, France} \\
  \small{MPI Informatics, Saarbrücken, Germany}                 &
                                  \small{cristian.vargas-montero6@etu.univ-lorraine.fr}
  \\
 \small{hrahkooy@mpi-inf.mpg.de} & 
 \end{tabular}
}
\date{}

\maketitle
\begin{abstract}
  We study binomiality of the steady state ideals of chemical reaction
  networks. Considering rate constants as indeterminates, the concept
  of unconditional binomiality has been introduced and an algorithm
  based on linear algebra has been proposed in a recent work for
  reversible chemical reaction networks, which has a polynomial time
  complexity upper bound on the number of species and reactions. In
  this article, using a modified version of species--reaction graphs,
  we present an algorithm based on graph theory which performs by
  adding and deleting edges and changing the labels of the edges in
  order to test unconditional binomiality. We have implemented our
  graph theoretical algorithm as well as the linear algebra one in
  Maple and made experiments on biochemical models. Our experiments
  show that the performance of the graph theoretical approach is
  similar to or better than the linear algebra approach, while it is
  drastically faster than Gr\"obner basis and quantifier elimination
  methods.
\end{abstract}

\section{Introduction}

Chemical reactions are transformations between chemical species where
a creation or elimination of species may happen with respect to
changes in time, pressure, temperature, etc. A set of chemical
reactions is called a chemical reaction network (CRN) and if all the
reactions in a CRN are reversible, it is called a reversible chemical
reaction network (RCRN). We assume \textit{mass--action kinetics} in
this article. The following is an example of an RCRN.
\begin{center}\label{eq:ex1rn}
    \ch{ A + B <=>[ k12 ][ k21 ] C <=>[ k23 ][ k32 ] A + 2 D},
  \end{center}
  with species \ch{A}, \ch{B}, \ch{C} and \ch{D} and \textit{rate
  constants} $k_{12},k_{21},k_{23}$ and $k_{32}$.

  Ordinary differential equations (ODE) can be used to study the
  changes in the concentration of species of a CRN. The ODEs
  associated to the above CRN are

\begin{align}
  \dot{x}_1  &=p_1, & p_1& = -k_{12}x_1x_2+k_{21}x_3+k_{23}x_3& \nonumber\\&&& -k_{32}x_1x_4^{2}&   \nonumber\\
  \dot{x}_2  &=p_2, & p_2& = -k_{12}x_1x_2+k_{21}x_3&  \nonumber\\
  \dot{x}_3  &=p_3, & p_3&= k_{12}x_1x_2-k_{21}x_3-k_{23}x_3\nonumber\\ &&&+k_{32}x_1x_4^{2}&  \nonumber\\
  \dot{x}_4  &=p_4, & p_4& = 2k_{23}x_3-2k_{32}x_1x_4^{2}&\label{eq:n13B4}
\end{align}

The ideal generated by $p_1,p_2.p_3,p_4$ is called the {\em steady
  state ideal} of the CRN. The real positive zeros of the above ideal
are called the \textit{steady states}. Finding steady states is a
fundamental problem in CRN theory. For a thorough introduction to CRN
theory, we refer to Feinberg's Book \cite{Feinberg-Book}.

A CRN is called binomial if its steady state ideal is
binomial. Following the work in \cite{rahkooy2020linear}, in this
article we investigate binomiality over the ring
$\QQ[k_{ij},x_1,\dots,x_n]$, which we call \textit{unconditional
  binomiality}. Some authors have considered binomiality of CRNs over
$\QQ(k_{ij})[x_1,\dots,x_n]$ when $k_{ij}$ are specialised in
$\RR$, e.g., in \cite{perez_millan_chemical_2012}, which we call
\textit{conditional binomiality}.

Binomial ideals and toric varieties are historic topics in
thermodynamic and go back to Boltzmann and Einstein. 
Binomiality and toricity have been widely studied in
mathematics \cite{fulton_introduction_2016,sturmfels_grobner_1996,Eisenbud_1996}. Binomiality
is a hard problem and the typical approach by computing a Gr\"obner
basis is EXPSPACE-complete \cite{mayr_complexity_1982}.

In CRN theory, various articles have been written on binomiality and
toricity, e.g., by Gorban et
al.~\cite{gorban_generalized_2015}, Grigoriev and Weber \cite{GrigorievWeber2012a}
and others \cite{dickenstein2019,sadeghimanesh2019}.
Feinberg \cite{Feinberg1972}
and Craciun, et al. \cite{craciun_toric_2009} have studied toric
dynamical systems.

Dickenstein et al.~have presented sufficient linear algebra conditions
with inequalities for conditional binomiality in 
\cite{perez_millan_chemical_2012}, which lead to MESSI
CRNs \cite{millan_structure_2018}. For homogeneous ideals, it has been
shown that Dickenstein et al.'s condition is necessary as
well \cite{conradi2015detecting}. The latter has been implemented in
Maple and Macaulay II in \cite{MapleCK,alex2019analysis}. A geometric
view towards toricity has been considered by Grigoriev et al.~in 
\cite{grigoriev2019efficiently}, introducing shifted toricity and
presenting algorithms using quantifier
elimination \cite{Davenport:1988:RQE:53372.53374,Grigoriev:88a,Weispfenning:88a}
and Gr\"obner bases \cite{Buchberger:65a,bb-system}
A first order logic test for toricity has been given in 
\cite{sturm2020firstorder}. In \cite{rahkooy2020linear} unconditional
binomiality has been introduced and for reversible reactions a
polynomial time algorithm, based on a linear algebra approach, has
been presented.

Graph theory has been used in the study of CRNs, e.g., for detecting
\textit{concordance} and \textit{weak
  monotonicity} \cite{Feinberg-Book,feinberg2012concordant}.

The main contribution of this article is a graph theoretical approach
for testing unconditional binomiality of an RCRN. We use a modification
of \textit{species--reaction} graphs and present an algorithm
equivalent to the linear algebra approach presented in 
\cite{rahkooy2020linear}. We have implemented the graph theoretical
algorithm as well the linear algebra algorithm in
Maple \cite{maplesoft} and compared them with each other and with the algorithms
based on Gr\"obner basis and quantifier elimination 
presented in \cite{grigoriev2019efficiently} via experiments on
biological models from the BioModels
repository \cite{BioModels2015a}. Our experiments showed that the graph
theoretical and linear algebra approaches are not only drastically
faster, but they can also handle many cases that were not possible to
compute using Gr\"obner basis and quantifier elimination.

The plan of this article is as follows. In Section 
\ref{sec:binomlinal} we briefly review the linear algebra approach in 
\cite{rahkooy2020linear}. Section \ref{sec:binomgraph} presents the
graph theoretical algorithm and the proof of its correctness, which
are the main contributions of this article.  Implementations of and
experiments using both
the graph theoretical and linear algebra algorithms are presented in
Section \ref{sec:implbench}.
    
\section{Testing Unconditional Binomiality via Linear
  Algebra}\label{sec:binomlinal}

In this section we briefly review the linear algebra method in 
\cite{rahkooy2020linear} for testing unconditional binomiality in
RCRNs.

\begin{definition}[Definition 1, \cite{rahkooy2020linear}]\label{def:binom}
  Let \ch{C} be a reversible chemical reaction network with the
  complexes \ch{C_{1}}, \dots ,\ch{C_{s}}, let
  $k_{ij} , 1 \leq i \neq j \leq s,$ be the constant rate of the
  reaction from \ch{C_{i}} to \ch{C_{j}}, and let
  $x_{1}, \cdots , x_{n}$ be the concentrations of the species.
  $b_{ij} :=-k_{ij}m_{i}+k_{ji}m_{j}$ is called the binomial
  associated to the reaction from \ch{C_{i}} to \ch{C_{j}}. If there
  is no reaction between \ch{C_{i}} and \ch{C_{j}}, set $b_{ij} := 0$.
\end{definition}

\begin{example}\label{eq:ex3rn}
  The binomials associated to the reversible reactions presented in
  the introduction section are
  $b_{12} = -k_{12}x_1x_2+k_{21}x_3$ and
  $b_{23} = -k_{23}x_3+k_{32}x_1x_4^{2}$.
\end{example}

Following Definition \ref{def:binom}, one can write the corresponding
ODEs for an RCRN as
\begin{equation}\label{eq:sumb}
  \dot{x}_{k} = p_{k} = \sum_{j=1}^{s} c_{ij}^{(k)}b_{ij}, \quad k=1, \dots, n,
\end{equation}
where $c_{ij}^{(k)}$ is the difference between the stoichiometric
coefficients of the $k-$th species in the reaction \ch{C_i <=> C_j}.
It has been shown that if rate constants are indeterminates then the
above sum of binomials representation is
unique \cite{rahkooy2020linear} .

\begin{definition} [Definition 6, \cite{rahkooy2020linear}] Let \ch{C}
  be an RCRN as in Definition \ref{def:binom} and assume that $p_i$,
  the generators of its steady state ideal, are written as sum of
  binomials as in Equation \ref{eq:sumb}. We define the binomial
  coefficient matrix of \ch{C} to be the matrix whose rows are labeled
  by $p_{1}, \dots , p_{n}$ and whose columns are labeled by nonzero
  $b_{ij}$ and the entry in row $p_k$ and column $b_{ij}$ is
  $c_{ij}^{(k)}$.
\end{definition}

The binomial coefficient matrix can be used to test the unconditional
binomiality of an RCRN.
\begin{theorem}[Theorem 8, \cite{rahkooy2020linear}]\label{th:binom}
  A RCRN is unconditionally binomial (i.e., assuming the rate
  constants to be indeterminates) if and only if the reduced row
  echelon form(RREF) of its binomial coefficient matrix has at most one
  nonzero entry at each row.
\end{theorem}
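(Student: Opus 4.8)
The plan is to exploit the fact that, because the rate constants are independent indeterminates, the binomials $b_{ij}$ attached to distinct reactions involve disjoint sets of monomials: the two terms of $b_{ij}$ carry the rate constants $k_{ij}$ and $k_{ji}$, which occur in no other $b_{i'j'}$ and are distinct from each other. Consequently a $\QQ[x_1,\dots,x_n]$-linear combination $\sum_{ij} h_{ij} b_{ij}$ undergoes no cancellation, either across different pairs or between the two halves of a single $b_{ij}$, so it has exactly $2\sum_{ij}\#\,\mathrm{terms}(h_{ij})$ monomials and its coefficients $h_{ij}$ are recovered uniquely from it. I would isolate this as a preliminary counting lemma, since both directions rest on it. In particular, a $\QQ$-linear combination of the $b_{ij}$ is a binomial precisely when it is a scalar multiple of a single $b_{ij}$.

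The forward (``if'') direction is then immediate. Passing to the RREF of the binomial coefficient matrix amounts to replacing $p_1,\dots,p_n$ by $\QQ$-linear combinations $q_1,\dots,q_m$ spanning the same $\QQ$-space; as these replacements are invertible over $\QQ$, the $q_r$ generate the same ideal. If every row of the RREF has at most one nonzero entry, each $q_r$ is a scalar multiple of a single $b_{ij}$, hence a binomial, and the steady state ideal is binomial.

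The substance is the converse, and the main obstacle is that binomiality of the ideal a priori only supplies binomial generators with arbitrary polynomial coefficients, whereas I must control the distinguished $\QQ$-linear combinations recorded by the RREF. I would overcome this using the grading of $\QQ[k_{ij},x]$ by total degree in the rate constants. Each $p_k$, and hence the whole steady state ideal $I$, is homogeneous of degree $1$ in this grading, so $I$ is graded and its degree-$1$ component is the $\QQ[x]$-module $I_1=\sum_k \QQ[x]\,p_k=\sum_r \QQ[x]\,q_r$. A graded binomial ideal admits homogeneous binomial generators, since the graded components of any binomial are monomials or binomials and still lie in $I$; and by the counting lemma $I_1$ contains no monomials (every element has an even number of terms), so its degree-$1$ generators are genuine binomials, each of the form $c\,x^{\alpha}b_{ij}$. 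Therefore $I_1$ decomposes, compatibly with the disjoint supports, as a direct sum $\bigoplus_{ij} J_{ij}\,b_{ij}$ over coefficient ideals $J_{ij}\subseteq\QQ[x]$.

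From this splitting I would conclude as follows. Writing an RREF row as $q_r=\sum_{ij}\gamma_{ij}b_{ij}$ with $\gamma_{ij}\in\QQ$, the direct-sum structure forces each component $\gamma_{ij}b_{ij}$ to lie in $I_1$, so $b_{ij}\in I_1$ for every column with $\gamma_{ij}\neq 0$. Expanding such a $b_{ij}=\sum_r f_r q_r$ with $f_r\in\QQ[x]$ and comparing the degree-$0$ parts in $x$, using once more that the $b_{ij}$-coefficients are uniquely determined, yields that the standard basis vector $e_{ij}$ lies in the $\QQ$-row space $W$ of the matrix. Hence $W$ contains $e_{ij}$ for every column appearing in the RREF; since $W$ is also contained in the span of exactly those coordinate vectors, $W$ must be a coordinate subspace. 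The uniqueness of the RREF then makes its nonzero rows the corresponding standard basis vectors, so each row has exactly one nonzero entry, as required.
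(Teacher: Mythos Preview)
The paper does not contain a proof of this theorem: it is quoted verbatim as Theorem~8 of \cite{rahkooy2020linear} and then immediately used, with no argument given here. So there is nothing in the present paper to compare your proposal against.

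That said, your argument is sound and stands on its own. The key device---grading $\QQ[k_{ij},x]$ by total degree in the rate constants---is exactly the right way to exploit the hypothesis that the $k_{ij}$ are indeterminates. Your counting lemma (disjoint monomial supports of the $b_{ij}$, hence unique $\QQ[x]$-coefficients and an even number of terms in any $\QQ[x]$-combination) cleanly rules monomials out of $I_1$ and forces every degree-$1$ binomial in $I$ to be a monomial multiple of a single $b_{ij}$. The passage from the direct-sum splitting $I_1=\bigoplus_{ij}J_{ij}b_{ij}$ to ``$e_{ij}$ lies in the $\QQ$-row space whenever column $(i,j)$ occurs in the RREF'' is the heart of the converse, and your extraction of the constant-in-$x$ part from $b_{ij}=\sum_r f_r q_r$ does yield the required $\QQ$-linear relation $\sum_r (f_r)_0\,\gamma^{(r)}_{i'j'}=\delta_{(ij),(i'j')}$ by uniqueness of the $b$-coefficients. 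The final identification of the row space with a coordinate subspace, and hence of the RREF rows with standard basis vectors, is then immediate.

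Two small points of presentation you may want to tighten. First, when you invoke ``a graded binomial ideal admits homogeneous binomial generators,'' make explicit that the degree-$1$ generators alone already generate $I_1$ as a $\QQ[x]$-module (because $I_0=0$); otherwise a reader might worry that higher-degree binomial generators are needed to reach every $q_r$. Second, the phrase ``comparing the degree-$0$ parts in $x$'' is doing real work and deserves one more line: you are using that $\sum_r f_r\gamma^{(r)}_{i'j'}$ is a polynomial in $x$ equal to the constant $\delta_{(ij),(i'j')}$, so its constant term already gives the desired $\QQ$-identity.
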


The theorem leads to a linear algebra approach for testing
unconditional binomiality \cite[Algorithm 1]{rahkooy2020linear},
which is polynomial time in terms of the size of matrix, which is the
maximum of the number of species and reactions in the RCRN. In Section
\ref{sec:implbench}, we present an implementation of the algorithm in
Maple and compare it with the other approaches.

\section{A Graph Theoretical Approach}\label{sec:binomgraph}

In this section we present a graph theoretical algorithm equivalent to
the linear algebra approach in \cite{rahkooy2020linear}. A CRN can be
represented as a graph in several ways. A first idea is the
well--known complex--reaction graph of a CRN, in which the complexes
are considered as the vertices and the reactions as the directed
edges. Another idea is \textit{species--reaction graphs}, which is
used to study \textit{concordance} and \textit{weakly monotonicity of
  kinetics} in a CRN \cite[Theorem 10.5.5, Theorem 11.5.1, Theorem
11.6.1]{Feinberg-Book}.
We use a modified version of the species--reaction graph, defined only
for RCRNs, for detecting unconditional binomiality. For definition and
notations on graph theory we rely on \cite{west2017introduction}.

\begin{definition}[Modified Species--Reaction Graph of an RCRN]
  Let $S$ be the set of species and $R$ the set of reactions of a
  given RCRN. The modified species--reaction graph $G$ of the RCRN is
  defined as follows.
\begin{itemize}
\item For each species $s \in S$, consider a vertex of $G$ (species
  vertices)
\item For each reaction $r \in R$, consider a vertex of $G$(reactions
  vertices)
\item For each reaction vertex, there exists an undirected edge to the
  species vertices of the species that appear in the reaction.
\item Each edge of the graph is labeled by an integer number which is
  the difference between the stoichiometric coefficients of the
  species(present at one end of the edge) in the reactant and product
  complexes.
\end{itemize}
\end{definition}

\begin{example}\label{ex:rnbin}
  The species--reaction graph of the following RCRN is showed in
  Figure \ref{fig:graph2}.
\begin{center}
    \ch{2 A + B  <=> C <=> A<=> 2 B}.    
\end{center}
\end{example}

\begin{figure}
\centering
\includegraphics[scale=0.8]{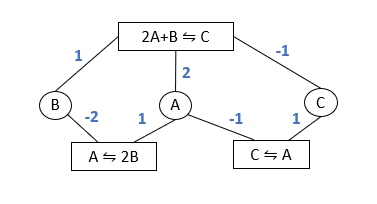}
\caption{Mod. Spec--Reac. Graph (Example \ref{ex:rnbin})}
\label{fig:graph2}
\end{figure}

In order to check unconditional binomiality of an RCRN using the
modified species--reaction graph, we present an algorithm that
modifies the graph by adding and deleting the edges, and updating the
adge labelsso that unconditional binomiality can be easily read off
from the final graph. The algorithm simulates the procedure described
in Theorem \ref{th:binom} from Section \ref{sec:binomlinal} and
Algorithm 1 from \cite{rahkooy2020linear} by reducing a binomial
coefficient matrix to its RREF in order to test unconditional
binomiality of the network.

The idea of the algorithm is as follows. First, we iterate through the
reaction vertices, selecting and marking an (random) unmarked species
vertex connected to the current reaction vertex (initially all species
vertices are unmarked). If there are no unmarked species vertices,
then the current reaction vertex is skipped. Then, we delete all the
edges incident to the current reaction vertex that are different from
the edge going to the marked species vertex. Furthermore,
if an edge exists from the marked species
vertex to a reaction vertex, then we add the edge from the reaction
vertex to the current species vertex. Nevertheless, if the edges
already exists, then we update the label accordingly and if the new
label is zero then we eliminate the edge. The final graph is reached
when all reaction vertices have been visited. At the end of the
algorithm, unconditional binomiality is checked by testing if each
component of the final graph contains either only one species vertex
or one species vertex and one reaction vertex. The algorithm is fully
described below.

\begin{algorithm}
 \footnotesize
 \caption{Testing unconditional binomiality via
   graphs \label{alg:binom-graph}}
  \DontPrintSemicolon
  \SetAlgoVlined
   \LinesNumbered
  
  \SetKwProg{Fn}{Function}{}{end}
  \SetKwFunction{BINOMTEST}{BinomialityTestViaGraph}
  \SetKwFunction{CREATEGRAPH}{CreateGraph}
  \SetKwFunction{MRK}{SetMark}
  \SetKwFunction{GS}{GetConnectedSpecies}
  \SetKwFunction{GR}{GetConnectedReactions}
  \SetKwFunction{ISNMRK}{IsNotMarked}
  \SetKwFunction{GCO}{GetCoeff}
  \SetKwFunction{ICO}{UpdCf}
  \SetKwFunction{ELED}{ElimEdge}
  \SetKwFunction{ADDED}{AddEdge}
  \SetKwFunction{CONN}{AreConnected}
  \SetKwFunction{IB}{IsUnconditionallyBinomial}
  \SetKwFunction{GETSVS}{GetSpeciesVertices}
  \SetKwFunction{GETRVS}{GetReactionVertices}

  \Fn{\BINOMTEST{$\mathcal{S,R,G}$}}{
    \KwIn{\\$\mathcal{S}$: set of species of the RCRN\\$\mathcal{R}$: set of reactions of the RCRN}
    \KwOut{UnconditionallyBinomial or NotUnconditionallyBinomial}
    $\mathcal{G} := \CREATEGRAPH{$\mathcal{S,R}$}$\\
    $SV := \GETSVS{$\mathcal{G}$}$\\
    $RV := \GETRVS{$\mathcal{G}$}$\\
    \ForEach{$s \in SV$}{
        $\MRK{s,false} $
    }
    \ForEach{$r \in RV$}{
        $speciesFound := false$\\
        $cs := null$\\
        $rSpecies := \GS{$\mathcal{G},r$}$\\
        \ForEach{$sr \in rSpecies$}{
            \uIf{\ISNMRK{$sr$}}{
                $speciesFound := true$\\
                $cs := sr$\\
                $break$
            }
        }
        \uIf{$speciesFound$}{
            $\MRK{cs,true} $\\
            \ForEach{$s^{\prime} \in rSpecies$}{
                \uIf{$s^{\prime} \neq cs$}{
                    $multX :=\frac{-\GCO{$\mathcal{G}$,$s^{\prime},r$}}{\GCO{$\mathcal{G}$,cs,r}}$\\
                    $\ELED{$\mathcal{G},s^{\prime},r$}$\\
                    $sReactions := \GR{$\mathcal{G},cs$}$\\
                    \ForEach{$r^{\prime} \in sReactions$}{
                        \uIf{$r^{\prime} \neq r$}{
                            \uIf{$\CONN{$\mathcal{G},s^{\prime},r^{\prime}$}$}{
                                $cf := \GCO{$\mathcal{G},cs,r^{\prime}$}*multiX + \GCO{$\mathcal{G},s^{\prime},r^{\prime}$}$\\
                                \uIf{$cf \neq 0$}{
                                    $\ICO{$\mathcal{G},s^{\prime},r^{\prime},cf$}$\\
                                }
                                \Else{
                                    $\ELED{$\mathcal{G},s^{\prime},r^{\prime}$}$\\
                                }
                            }
                            \Else{
                              $cf := \GCO{$\mathcal{G},cs,r^{\prime}$}*multiX$\\ $\ADDED{$\mathcal{G},s^{\prime},r^{\prime},cf$}$ 
                            }
                        }
                    }
                    
                }
      
      }
    }

    }
    \uIf{\IB{$\mathcal{G}$}}{
      $R:=UnconditionallyBinomial$\;
    }
    \Else{$R:=NotUnconditionallyBinomial$\;
    }
    \Return{$R$}\;
  }
  \end{algorithm}

  The functions in Algorithm \ref{alg:binom-graph} are
  self-explanatory. We just mention that \texttt{ElimEdge} eliminates
  the edge connecting a species vertex and reaction vertex, and
  \texttt{UpdCf} updates the coefficient of the edge that goes from a
  species vertex to a reaction vertex (Detailed description of the
  functions can be found in the Git repository \cite{MapleModules}).

\begin{theorem}\label{th:graph}
  Algorithm \ref{alg:binom-graph} is correct, terminates and its
  asymptotic worst case time complexity can be bounded by
  $\mathcal{O}(\max(r,n)^\omega)$, where $\omega$ is the constant
  appearing in the complexity of matrix multiplication, $r$ is the
  number of reactions and $n$ is the number of species of an RCRN.
\end{theorem}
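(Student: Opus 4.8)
The plan is to prove the three claims---correctness, termination, and the complexity bound---by exploiting the exact correspondence between the modified species--reaction graph and the binomial coefficient matrix, and then reducing correctness to Theorem~\ref{th:binom}. First I would fix the dictionary between the two representations: a species vertex $s$ is a row (the generator $p_s$), a reaction vertex is a column (a binomial $b_{ij}$), and an edge joining $s$ to a reaction with label $\ell$ is exactly a nonzero entry $c_{ij}^{(s)}=\ell$, while the absence of an edge is a zero entry.

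Under this dictionary I would verify, by inspecting the innermost block of Algorithm~\ref{alg:binom-graph}, that one pass of the main loop for a reaction vertex $r$ with chosen pivot species $cs$ performs precisely the elementary row operations $\mathrm{Row}(s')\leftarrow\mathrm{Row}(s')+multX\cdot\mathrm{Row}(cs)$, with $multX=-\,\mathrm{coeff}(s',r)/\mathrm{coeff}(cs,r)$, carried out for every species $s'\neq cs$ incident to $r$ and all using the unchanged pivot row $cs$. Evaluating such an operation in column $r$ gives $0$, which is realized by \texttt{ElimEdge}, whereas evaluating it in another column $r'$ incident to $cs$ gives $\mathrm{coeff}(s',r')+multX\cdot\mathrm{coeff}(cs,r')$, realized by \texttt{UpdCf}, by \texttt{AddEdge} when $s'$ was not yet incident to $r'$, or by an edge deletion when the value vanishes. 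Since $\mathrm{coeff}(cs,r)\neq0$ each operation is invertible, so the matrix encoded by the graph stays row-equivalent to the original throughout; in particular its rank is preserved and no nonzero column can become zero.

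The technical core, and the step I expect to be the main obstacle, is a \emph{stability invariant} for the pivoting, which I would prove by induction on the main loop: once a reaction column has been processed with pivot $cs$ it retains a single nonzero entry (at $cs$) for the rest of the run, and no edge is ever created between an unmarked species and an already-processed or skipped reaction. The crux is that a marked species is the unique neighbour of its pivot reaction, so any later operation that could disturb a pivot edge would force the current (unmarked) pivot to be incident to an already-pivoted column, which the invariant forbids; a parallel induction rules out an unmarked species becoming attached to a skipped reaction. Granting the invariant, the final graph decomposes into isolated species vertices (zero rows), matched species--reaction pivot pairs, and skipped reactions attached solely to pivot species, and the marking discipline forces the number of pivots to equal the rank of the matrix for any processing order and any admissible pivot choice.

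With the invariant in hand, correctness of the test \texttt{IsUnconditionallyBinomial} follows: it succeeds exactly when there is no skipped reaction, that is, when every reaction is a pivot and the matrix has full column rank. Since every reaction genuinely changes some species, the binomial coefficient matrix has no zero column, so full column rank is equivalent to the reduced row echelon form having at most one nonzero entry per row, which by Theorem~\ref{th:binom} is unconditional binomiality. Termination is immediate because all loops range over the finite sets $SV$ and $RV$ and over finite incidence lists. For the complexity, I would note that the arithmetic executed by the algorithm coincides with a Gaussian elimination of the $n\times r$ binomial coefficient matrix; the direct loop count yields $\mathcal{O}(\max(r,n)^3)$, and the sharper exponent is obtained by batching the pivot eliminations as block operations and invoking the standard reduction of reduced row echelon form computation to matrix multiplication, giving $\mathcal{O}(\max(r,n)^\omega)$ on a matrix of side $\max(r,n)$.
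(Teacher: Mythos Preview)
Your proposal is correct and follows essentially the same strategy as the paper: set up the dictionary between the modified species--reaction graph and the binomial coefficient matrix, argue that the edge manipulations in Algorithm~\ref{alg:binom-graph} realise exactly the elementary row operations of Gaussian/Gauss--Jordan elimination, and then invoke Theorem~\ref{th:binom}. The paper's own proof is terser---it asserts the step-by-step equivalence and reads off termination and complexity from the linear-algebra algorithm---whereas you additionally isolate and prove the pivot-stability invariant and the full-column-rank characterisation of \texttt{IsUnconditionallyBinomial}, which makes rigorous precisely the points the paper leaves implicit.
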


\begin{proof}  
  \textbf{(Proof of the correctness)} Assume that the algorithm output
  is unconditionally binomial, then we must prove that the steady
  state ideal of the RCRN is binomial. To do so, we will show that the
  steps of the graph theoretical Algorithm \ref{alg:binom-graph} are
  equivalent to the steps of the linear algebra approach in 
  \cite[Algorithm 1]{rahkooy2020linear} using Reduced Row Echelon Form
  (RREF) in the binomial coefficient matrix. Based on Theorem
  \ref{th:binom}, Algorithm 1 from \cite{rahkooy2020linear} initially
  selects a pivot in the matrix which is equivalent to marking an
  unmarked species vertex that is connected to a reaction vertex in
  steps 9 to 14. Reducing the nonzero entries to zero in the column of
  the selected pivot in the matrix is equivalent to eliminating the
  edges from the reaction vertex to all other species vertices that
  are not the one selected in step 20. While performing the reduction,
  some entries of other rows may be affected. The equivalent to this
  in Algorithm \ref{alg:binom-graph} is the update of the coefficients
  in some edges in step 27 or the elimination of edges in step 29 or
  the addition of edges in step 32. Then, obtaining the RREF of the
  matrix is equivalent to a combination of the following items:
\begin{itemize}
\item species vertices without edges, which are equivalent to the zero
  rows and/or
\item reaction(species) vertices connected to at least one
  species(reaction) vertex, which are equivalent to the final
  columns(rows) of the matrix.
\end{itemize}
Finally, testing unconditional binomiality in the matrix (via checking
that it has at most one nonzero entry at each row) is equivalent to
checking that the components of the final graph contain either:
\begin{itemize}
\item only one species vertex or,
\item one species vertex and one reaction vertex connected to one
  another.
\end{itemize}

\textbf{(Proof of termination)} For the proof of termination, note
that the graph has finitely many edges and vertices. Hence, each of
the loops terminates at some point. The loops at lines 6 and 22
eventually terminate as they iterate through reaction vertices which
are finite and there is no creation of such vertices
anywhere. Likewise, the loops at lines 4,10 and 17 terminate as they
iterate through species vertices which are also finite, and no new
vertices are created.

\textbf{(Proof of the complexity bound)} This comes from the fact that
each operation in the graph theoretical algorithm corresponds to an
operation in the linear algebra approach.
\end{proof}

As a sidenote, for any transformed graph one can generate a binomial
coefficient matrix by taking the species vertices as rows, reactions
vertices as columns and the labels of edges as entries and vice versa,
from any binomial coefficient matrix one can generate a graph by
applying the reverse procedure.

\begin{example}\label{ex:rnnotbin}
  Graphs generated at each step of Algorithm \ref{alg:binom-graph} for
  the following RCRN has been shown in Figure \ref{fig:graph3}.
\begin{center}
    \ch{3 B <=> 2 C + A <=> 2 D + 2 B<=> 3 B}.
\end{center}
The final graph has a component that contains two species vertices and
three reaction vertices and therefore the RCRN is not unconditionally
binomial. On the other hand, Algorithm 1 in \cite{rahkooy2020linear}
gives the following matrix
which shows that the RCRN is not unconditionally binomial since the
first and second rows contain more than one nonzero entries.
\end{example}
\begin{figure}
\centering
\includegraphics[scale=0.6]{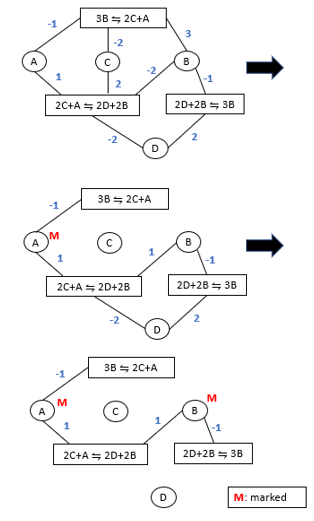}
\caption{Algorithm \ref{alg:binom-graph} on Example
  \ref{ex:rnnotbin}.}
\label{fig:graph3}
\end{figure}

\section{Experiments}\label{sec:implbench}

We have implemented our graph theoretical approach in Algorithm
\ref{alg:binom-graph} in Section \ref{sec:binomgraph} as well as the
linear algebra approach in \cite[Algorithm 1]{rahkooy2020linear} in
Maple. Both algorithms are available in the Git repository
\cite{MapleModules}.
The performance of the implementations is tested on the biomodels from
the BioModels repository \cite{BioModels2015a} and the results are
compared to the experiments done on the same biomodels using Gr\"obner
basis and quantifier elimination in 
\cite{grigoriev2019efficiently}. Note that in order to be able to test
unconditional binomiality of biomodels, we have assumed reversibility
with free reverse rate constants, while in 
\cite{grigoriev2019efficiently} binomiality with preassigned values of
rate constants are tested. The result of the experiments can be found
in the Git repository \cite{MapleModules}.

One can notice that overall, Algorithm \ref{alg:binom-graph} performed
equal or better than \cite[Algorithm 1]{rahkooy2020linear}. For a vast
amount of models the difference of the performance of those two
algorithms is almost zero, which was predictable as the proof of
Theorem \ref{th:graph} suggests that the steps in the graph
theoretical algorithm and the linear algebra approach are
equivalent. However, for some models (e.g., 205, 293 and 574) the
graph theoretical approach is faster. It is worth noting that these
models have large binomial coefficient matrices, which may suggest
that for RCRN with a large number of species and/or reactions the
graph theoretical approach can be faster. This will be investigated
further in future as the number of models with large matrices is not
enough for a through comparison at this stage.

Comparing Algorithm \ref{alg:binom-graph} with the Gr\"obner basis and
quantifier elimination computations in \cite{grigoriev2019efficiently}
shows that for the great majority of the cases, the graph theoretical
approach is much faster.  More importantly, there are twenty biomodels
that Gr\"obner basis and/or quantifier elimination methods in 
\cite{grigoriev2019efficiently} run out of time (for a six hour limit
of computations), while those cases were handled in less than three
seconds via both graph theory and linear algebra approaches. For six
biomodels Gr\"obner basis computations terminate in less than six
hours, however, our graph theory algorithm as well as the linear
algebra approach are at least 1000 times faster than those. For ten
biomodels the graph theory (and the linear algebra) approach is at
least 500 times faster than the computations via Gr\"obner basis and
quantifier elimination.

An interesting observation is the quite high number (sixty nine) of
the biomodels that are not considered in 
\cite{grigoriev2019efficiently} for the unclear numeric values of
their rate constants, whereas our graph theoretical (and linear
algebra) approaches on almost all of those cases terminated in less
than a second.

A comparison of the performance of our graph theory algorithm with the
algorithms proposed in 
\cite{perez_millan_chemical_2012,conradi2015detecting} are similar to
the performance of the linear algebra algorithm in 
\cite{rahkooy2020linear} vs algorithms in 
\cite{perez_millan_chemical_2012,conradi2015detecting}. This is
because, as it is mentioned earlier in this section, the graph
theoretical algorithm has a similar (or better) performance to the
linear algebra algorithm in \cite{rahkooy2020linear}. In particular,
for two reversible biomodels in the database (Biomodels 491 and 492),
the graph theoretical method is more than twice faster than the linear
algebra in \cite{rahkooy2020linear}, which means that it is much
faster than the algorithms in 
\cite{perez_millan_chemical_2012,conradi2015detecting}.  More details
of some comparisons between the linear algebra methods in the
literature for testing unconditional binomiality and conditional
binomiality can be found in \cite[Section 3]{rahkooy2020linear}.

\section{Conclusion}
The present work introduces an efficient graph theoretical algorithm
for testing unconditional binomiality in an RCRN, which is different
from the conditional binomiality considered by several other
authors. The algorithm is essentially equivalent to the linear algebra
approach presented earlier for testing unconditional
binomiality. Implementations of the graph theoretical algorithm as
well the linear algebra approach are done in Maple and experiments are
carried on over several biomodels.

While the graph theoretical algorithm seem to have a slight advantage
over the linear algebra approach, the experiments reveal drastic
advantage of both of those methods over the existing algorithms based
on Gr\"obner basis and quantifier elimination. Additionally, many
cases that could not be handle by the Gr\"obner basis and quantifier
elimination methods in a reasonable time, were tested in less than few
seconds via the graph theoretical approach.
  
\subsection*{Acknowledgments}
This work has been supported by the bilateral project
ANR-17-CE40-0036/DFG-391322026 SYMBIONT 
\cite{BoulierFages:18a,BoulierFages:18b}.

\bibliography{arxiv2.bib}
\bibliographystyle{plain}

\end{document}